\begin{document}
\begin{frontmatter}
  \title{On the Groupoid Model of Computational Paths} \author{Arthur F. Ramos\thanksref{myemail}}
  \address{Centro de Informática\\ Universidade Federal de Pernambuco\\
    Recife, Brazil}
\author{Ruy J. G. B. de Queiroz\thanksref{coemail}}
  \address{Centro de Informática\\Universidade Federal de Pernambuco\\
  Recife, Brazil}
\author{Anjolina G. de Oliveira\thanksref{cocoemail}}
  \address{Centro de Informática\\Universidade Federal de Pernambuco\\
  Recife, Brazil}
 \ \thanks[myemail]{Email:
    \href{afr@cin.ufpe.br} {\texttt{\normalshape
        afr@cin.ufpe.br}}} \thanks[coemail]{Email:
    \href{ruy@cin.ufpe.br} {\texttt{\normalshape
        ruy@cin.ufpe.br}}}
    \thanks[cocoemail]{Email:\href{ago@cin.ufpe.br} {\texttt{\normalshape
        ago@cin.ufpe.br}}}

\begin{abstract}
	
	The main objective of this work is to study mathematical properties of computational paths. Originally proposed by de Queiroz \& Gabbay (1994) as `sequences or rewrites', computational paths are taken to be terms of the identity type of Martin L\"of's Intensional Type Theory, since these paths can be seen as the grounds on which the propositional equality between two computational objects stand. From this perspective, this work aims to show  that one of the properties of the identity type is present on computational paths. We are referring to the fact that that the identity type induces a groupoid structure, as proposed by Hofmann \& Streicher (1994). Using categorical semantics, we show that computational paths induce a groupoid structure. We also show that computational paths are capable of inducing higher categorical structures.

\end{abstract}
\begin{keyword}
	Computational paths, groupoid model, equality theory, term rewriting systems, type theory, category theory, higher categorical structures.
\end{keyword}
\end{frontmatter}
\section{Introduction}\label{intro}

There seems to be little doubt that the identity type is one of the most intriguing concepts of  Martin-L\"of's Type Theory. This claim is supported by recent groundbreaking discoveries. In 2005, Vladimir Voevodsky \cite{Vlad1} discovered the Univalent Models, resulting in a new area of research known as Homotopy Type Theory \cite{Steve1}. This theory is based on the fact that a term of some identity type, for example $p: Id_{A}(a,b)$, has a clear homotopical interpretation. The interpretation is that the witness $p$ can be seen as a homotopical path between the points $a$ and $b$ within a topological space $A$. This simple interpretation has made clear the connection between Type Theory and Homotopy Theory, generating groundbreaking results, as one can see in \cite{hott,Steve1}. It is important to emphasize that one important fact of the homotopic interpretation is that the homotopic paths exist only in the semantic sense. In other words, there is no formal entity in type theory that represents these paths. They are not present in the syntax of Type Theory.

In this work, we are interested in a concept known as computational path, originally proposed by \cite{Ruy4} and further explored by \cite{Ruy5}. A computational path is an entity that establishes the equality between two terms of the same type. It differs from the homotopical path, since it is not only a semantic interpretation. It is a formal entity of the equality theory, as proposed by \cite{Ruy5}. In fact, in our work entitled "On the Identity Type as the Type of Computational Paths" \cite{Art1}, we claim that computational paths should be considered as a formal entity of Type Theory and that the identity type is just the type of this new entity. That way, the computational path would exist not only in a semantic sense, but it would be also a formal entity of the syntax of Type Theory.

Since the connection between computational paths and identity types have been studied in the work that we have just mentioned, this connection will not be the objective here. Rather, we are interested in important mathematical properties that arise naturally from the concept of computational paths. Specifically, we want to show that computational paths naturally induce a structure known as a groupoid. In fact, that the identity type induces a groupoid structure is a well-known fact, established by \cite{hofmann1} in 1994. Nevertheless, this work differs substantially from \cite{hofmann1}. The main difference is the fact that in \cite{hofmann1}, the induced groupoid is constructed by the construction of several terms of the identity type, using the elimination rule (i.e., using the constructor $J$). In fact, we will not use the identity type as originally formulated or any elimination rule. We will work directly with the concept of computational path and concepts related to it. We will show that, using these concepts, we can induce a category that has the properties of a groupoid.

\section{Computational Paths} \label{path}

Since computational path is a generic term, it is important to emphasize the fact that we are using the term computational path in the sense defined by \cite{Ruy5}. A computational path is based on the idea that it is possible to formally define when two computational objects $a,b : A$ are equal. These two objects are equal if one can reach $b$ from $a$ applying a sequence of axioms or rules. This sequence of operations forms a path. Since it is between two computational objects, it is said that this path is a computational one. Also, an application of an axiom or a rule transforms (or rewrite) an term in another. For that reason, a computational path is also known as a sequence of rewrites. Nevertheless, before we define formally a computational path, we can take a look at one famous equality theory, the $\lambda\beta\eta-equality$ \cite{lambda}:

\begin{definition}
The \emph{$\lambda\beta\eta$-equality} is composed by the following axioms:

\begin{enumerate}
\item[$(\alpha)$] $\lambda x.M = \lambda y.[y/x]M$ \quad if $y \notin FV(M)$;
\item[$(\beta)$] $(\lambda x.M)N = [N/x]M$;
\item[$(\rho)$] $M = M$;
\item[$(\eta)$] $(\lambda x.Mx) = M$ \quad $(x \notin FV(M))$.
\end{enumerate}

And the following rules of inference:

\bigskip
\noindent
\begin{bprooftree}
\AxiomC{$M = M'$ }
\LeftLabel{$(\mu)$ \quad}
\UnaryInfC{$NM = NM'$}
\end{bprooftree}
\begin{bprooftree}
\AxiomC{$M = N$}
\AxiomC{$N = P$}
\LeftLabel{$(\tau)$}
\BinaryInfC{$M = P$}
\end{bprooftree}

\bigskip
\noindent
\begin{bprooftree}
\AxiomC{$M = M'$ }
\LeftLabel{$(\nu)$ \quad}
\UnaryInfC{$MN = M'N$}
\end{bprooftree}
\begin{bprooftree}
\AxiomC{$M = N$}
\LeftLabel{$(\sigma)$}
\UnaryInfC{$N = M$}
\end{bprooftree}

\bigskip
\noindent
\begin{bprooftree}
\AxiomC{$M = M'$ }
\LeftLabel{$(\xi)$ \quad}
\UnaryInfC{$\lambda x.M= \lambda x.M'$}
\end{bprooftree}

%\bigskip
%\noindent
%\begin{bprooftree}
%\AxiomC{$Mx = Nx$ }
%\LeftLabel{$(\zeta)$ \quad}
%\UnaryInfC{$M = N$}
%\end{bprooftree}

%\bigskip

%If $M = N$ is provable in $\lambda\beta\eta$, then we say that  $\lambda\beta\eta \vdash M = N$.

\end{definition}

%Then, we can define the following equality relation \cite{lambda}:

%\begin{definition}
%The equality relation defined by $\lambda\beta\eta$-equality is called $=_{\beta\eta}$. In other words:

%\begin{center}
%$M =_{\beta\eta} N \Leftrightarrow \lambda\beta\eta \vdash M = N$.
%\end{center}

%\end{definition}

\begin{definition}(\cite{lambda})
$P$ is $\beta$-equal or $\beta$-convertible to $Q$  (notation $P=_\beta Q$)
iff $Q$ is obtained from $P$ by a finite (perhaps empty)  series of $\beta$-contractions
and reversed $\beta$-contractions  and changes of bound variables.  That is,
$P=_\beta Q$ iff \textbf{there exist} $P_0, \ldots, P_n$ ($n\geq 0$)  such that
$P_0\equiv P$,  $P_n\equiv Q$,
$(\forall i\leq n-1) (P_i\triangleright_{1\beta}P_{i+1}  \mbox{ or }P_{i+1}\triangleright_{1\beta}P_i  \mbox{ or } P_i\equiv_\alpha P_{i+1}).$
\end{definition}
\noindent (NB: equality with an \textbf{existential} force, which will show in the proof rules for the identity type.)

The same happens with $\lambda\beta\eta$-equality:\\
\begin{definition}($\lambda\beta\eta$-equality \cite{lambda})
The equality-relation determined by the theory $\lambda\beta\eta$ is called
$=_{\beta\eta}$; that is, we define
$$M=_{\beta\eta}N\quad\Leftrightarrow\quad\lambda\beta\eta\vdash M=N.$$
\end{definition}

\begin{example}
Take the term $M\equiv(\lambda x.(\lambda y.yx)(\lambda w.zw))v$. Then, it is $\beta\eta$-equal to $N\equiv zv$ because of the sequence:\\
$(\lambda x.(\lambda y.yx)(\lambda w.zw))v, \quad  (\lambda x.(\lambda y.yx)z)v, \quad   (\lambda y.yv)z , \quad zv$\\
which starts from $M$ and ends with $N$, and each member of the sequence is obtained via 1-step $\beta$- or $\eta$-contraction of a previous term in the sequence. To take this sequence into a {\em path\/}, one has to apply transitivity twice, as we do in the example below.
\end{example}

\begin{example}\label{examplepath}
The term $M\equiv(\lambda x.(\lambda y.yx)(\lambda w.zw))v$ is $\beta\eta$-equal to $N\equiv zv$ because of the sequence:\\
$(\lambda x.(\lambda y.yx)(\lambda w.zw))v, \quad  (\lambda x.(\lambda y.yx)z)v, \quad   (\lambda y.yv)z , \quad zv$\\
Now, taking this sequence into a path leads us to the following:\\
The first is equal to the second based on the grounds:\\
$\eta((\lambda x.(\lambda y.yx)(\lambda w.zw))v,(\lambda x.(\lambda y.yx)z)v)$\\
The second is equal to the third based on the grounds:\\
$\beta((\lambda x.(\lambda y.yx)z)v,(\lambda y.yv)z)$\\
Now, the first is equal to the third based on the grounds:\\
$\tau(\eta((\lambda x.(\lambda y.yx)(\lambda w.zw))v,(\lambda x.(\lambda y.yx)z)v),\beta((\lambda x.(\lambda y.yx)z)v,(\lambda y.yv)z))$\\
Now, the third is equal to the fourth one based on the grounds:\\
$\beta((\lambda y.yv)z,zv)$\\
Thus, the first one is equal to the fourth one based on the grounds:\\
$\tau(\tau(\eta((\lambda x.(\lambda y.yx)(\lambda w.zw))v,(\lambda x.(\lambda y.yx)z)v),\beta((\lambda x.(\lambda y.yx)z)v,(\lambda y.yv)z)),\beta((\lambda y.yv)z,zv)))$.
\end{example}

%Note again that two terms are $\lambda\beta\eta$-equal if {\bf there exists} a sequence of applications of rules of %definitional equality which builds
%a proof of their equality in the theory of $\lambda\beta\eta$-equality.

The aforementioned theory establishes the equality between two $\lambda$-terms. Since we are working with computational objects as terms of a type, we need to translate the $\lambda\beta\eta$-equality to a suitable equality theory based on Martin L\"of's type theory. We obtain:
	
\begin{definition}
The equality theory of Martin L\"of's type theory has the following basic proof rules for the $\Pi$-type:

\bigskip

\noindent
\begin{bprooftree}
\hskip -0.3pt
\alwaysNoLine
\AxiomC{$N : A$}
\AxiomC{$[x : A]$}
\UnaryInfC{$M : B$}
\alwaysSingleLine
\LeftLabel{$(\beta$) \quad}
\BinaryInfC{$(\lambda x.M)N = M[N/x] : B[N/x]$}
\end{bprooftree}
\begin{bprooftree}
\hskip 11pt
\alwaysNoLine
\AxiomC{$[x : A]$}
\UnaryInfC{$M = M' : B$}
\alwaysSingleLine
\LeftLabel{$(\xi)$ \quad}
\UnaryInfC{$\lambda x.M = \lambda x.M' : (\Pi x : A)B$}
\end{bprooftree}

\bigskip

\noindent
\begin{bprooftree}
\hskip -0.5pt
\AxiomC{$M : A$}
\LeftLabel{$(\rho)$ \quad}
\UnaryInfC{$M = M : A$}
\end{bprooftree}
\begin{bprooftree}
\hskip 100pt
\AxiomC{$M = M' : A$}
\AxiomC{$N : (\Pi x : A)B$}
\LeftLabel{$(\mu)$ \quad}
\BinaryInfC{$NM = NM' : B[M/x]$}
\end{bprooftree}

\bigskip

\noindent
\begin{bprooftree}
\hskip -0.5pt
\AxiomC{$M = N : A$}
\LeftLabel{$(\sigma) \quad$}
\UnaryInfC{$N = M : A$}
\end{bprooftree}
\begin{bprooftree}
\hskip 105pt
\AxiomC{$N : A$}
\AxiomC{$M = M' : (\Pi x : A)B$}
\LeftLabel{$(\nu)$ \quad}
\BinaryInfC{$MN = M'N : B[N/x]$}
\end{bprooftree}

\bigskip

\noindent
\begin{bprooftree}
\hskip -0.5pt
\AxiomC{$M = N : A$}
\AxiomC{$N = P : A$}
\LeftLabel{$(\tau)$ \quad}
\BinaryInfC{$M = P : A$}
\end{bprooftree}

\bigskip

\noindent
\begin{bprooftree}
\hskip -0.5pt
\AxiomC{$M: (\Pi x : A)B$}
\LeftLabel{$(\eta)$ \quad}
\RightLabel {$(x \notin FV(M))$}
\UnaryInfC{$(\lambda x.Mx) = M: (\Pi x : A)B$}
\end{bprooftree}

\bigskip

\end{definition}

We are finally able to formally define computational paths:

\begin{definition}
Let $a$ and $b$ be elements of a type $A$. Then, a \emph{computational path} $s$ from $a$ to $b$ is a composition of rewrites (each rewrite is an application of the inference rules of the equality theory of type theory or is a change of bound variables). We denote that by $a =_{s} b$.
\end{definition}

As we have seen in example \ref{examplepath}, composition of rewrites are applications of the rule $\tau$. Since change of bound variables is possible, consider that each term is considered up to $\alpha$-equivalence.

\section{A Term Rewriting System for Paths}

As we have just showed, a computational path establishes when two terms of the same type are equal. From the theory of computational paths, an interesting case arises. Suppose we have a path $s$ that establishes that $a =_{s} b : A$ and a path $t$ that establishes that $a =_{t} b : A$. Consider that $s$ and $t$ are formed by distinct compositions of rewrites. Is it possible to conclude that there are cases that $s$ and $t$ should be considered equivalent? The answer is \emph{yes}. Consider the following examples:

\begin{example}
	\noindent \normalfont Consider the path  $a =_{t} b : A$. By the symmetric property, we obtain $b =_{\sigma(t)} a : A$. What if we apply the property again on the path $\sigma(t)$? We would obtain a path  $a =_{\sigma(\sigma(t))} b : A$. Since we applied symmetry twice in succession, we obtained a path that is equivalent to the initial path $t$. For that reason, we conclude the act of applying symmetry twice in succession is a redundancy. We say that the path $\sigma(\sigma(t))$ can be reduced to the path $t$.
\end{example}

\begin{example}
	\noindent \normalfont  Consider the reflexive path $a =_\rho a : A$. Since it is a trivial path, the symmetric path $a =_{\sigma(\rho)} a : A$ is equivalent to the initial one. For that reason, the application of the symmetry on the reflexive path is a redundancy. The path $\sigma(\rho)$ can be reduced to the path $\rho$.
	
\end{example}
\begin{example}
	\noindent \normalfont Consider the path $a =_{t} b : A$ and the inverse path $b =_{\sigma(t)} a : A$. We can apply the transitive property in these paths, obtaining $a =_{\tau(t,\sigma(t))} a : A$. Since the paths are inversions of each other, the transitive path is equivalent to the trivial path $\rho$. Therefore, this transitive application is a redundancy. The path $\tau(t,\sigma(t))$ can be reduced to the trivial path $\rho$.
\end{example}

As one could see in the aforementioned examples, different paths should be considered equal if one is just a redundant form of the other. The examples that we have just seen are straightforward and simple cases. Since the equality theory has a total of 7 axioms, the possibility of combinations that could generate redundancies are high. Fortunately, all possible redundancies were thoroughly mapped by De Oliveira (1995)\cite{Anjo1}. In this work, a system that establishes all redundancies and creates rules that solve them was proposed. This system, known as $LND_{EQ}-TRS$, maps a total of 39 rules that solve redundancies. Since we have a special interest on the groupoid model, we are not interested in all redundancy rules, but in a very specific subset of these rules (all have been taken from \cite{Anjo1,Ruy1}):

\begin{itemize}
\item Rules involving $\sigma$ and $\rho$

\bigskip

\begin{prooftree}
\AxiomC{$x =_{\rho} x : A$}
\RightLabel{\quad $\rhd_{sr}$ \quad $x =_{\rho} x : A$}
\UnaryInfC{$x =_{\sigma(\rho)} x : A$}
\end{prooftree}

\begin{prooftree}
\AxiomC{$x =_{r} y : A$}
\UnaryInfC{$y =_{\sigma(r)} x : A$}
\RightLabel{\quad $\rhd_{ss}$ \quad $x =_{r} y : A$}
\UnaryInfC{$x =_{\sigma(\sigma(r))} y : A$}
\end{prooftree}

\bigskip

\item Rules involving $\tau$

\bigskip
\begin{prooftree}
\AxiomC{$x =_{r} y : A$}
\AxiomC{$y =_{\sigma(r)} x : A$}
\RightLabel{\quad  $\rhd_{tr}$ \quad $x =_{\rho} x : A$}
\BinaryInfC{$x =_{\tau(r,\sigma(r))} x : A$}
\end{prooftree}

\begin{prooftree}
\AxiomC{$y =_{\sigma(r)} x : A$}
\AxiomC{$x =_{r} y : A$}
\RightLabel{\quad $\rhd_{tsr}$ \quad $y =_{\rho} y : A$}
\BinaryInfC{$y =_{\tau(\sigma(r),r)} y : A$}
\end{prooftree}

\begin{prooftree}
\AxiomC{$x =_{r} y : A$}
\AxiomC{$y =_{\rho} y : A$}
\RightLabel{\quad $\rhd_{trr}$ \quad $x =_{r} y : A$}
\BinaryInfC{$x =_{\tau(r,\rho)} y : A$}
\end{prooftree}

\begin{prooftree}
\AxiomC{$x =_{\rho} x : A$}
\AxiomC{$x =_{r} y : A$}
\RightLabel{\quad $\rhd_{tlr}$ \quad $x =_{r} y : A$}
\BinaryInfC{$x =_{\tau(\rho,r)} y : A$}
\end{prooftree}

\bigskip

\item Rule involving $\tau$ and $\tau$

\begin{prooftree}
\hskip - 155pt
\AxiomC{$x =_{t} y : A$}
\AxiomC{$y =_{r} w : A$}
\BinaryInfC{$x =_{\tau(t,r)} w : A$}
\AxiomC{$w =_{s} z : A$}
\BinaryInfC{$x =_{\tau(\tau(t,r),s)} z : A$}
\end{prooftree}

\begin{prooftree}
\hskip 4cm
\AxiomC{$x =_{t} y : A$}
\AxiomC{$y=_{r} w : A$}
\AxiomC{$w=_{s} z : A$}
\BinaryInfC{$y =_{\tau(r,s)} z : A$}
\LeftLabel{$\rhd_{tt}$}
\BinaryInfC{$x =_{\tau(t,\tau(r,s))} z : A$}
\end{prooftree}

\bigskip

\end{itemize}

\begin{definition}
\normalfont An $rw$-rule is any of the rules defined in $LND_{EQ}-TRS$.
\end{definition}

\begin{definition}
Let $s$ and $t$ be computational paths. We say that $s \rhd_{1rw} t$ (read as: $s$ $rw$-contracts to $t$) iff we can obtain $t$ from $s$ by an application of only one $rw$-rule. If $s$ can be reduced to $t$ by finite number of $rw$-contractions, then we say that $s \rhd_{rw} t$ (read as $s$ $rw$-reduces to $t$).

\end{definition}

\begin{definition}
\normalfont  Let $s$ and $t$ be computational paths. We say that $s =_{rw} t$ (read as: $s$ is $rw$-equal to $t$) iff $t$ can be obtained from $s$ by a finite (perhaps empty) series of $rw$-contractions and reversed $rw$-contractions. In other words, $s =_{rw} t$ iff there exists a sequence $R_{0},....,R_{n}$, with $n \geq 0$, such that

\centering $(\forall i \leq n - 1) (R_{i}\rhd_{1rw} R_{i+1}$ or $R_{i+1} \rhd_{1rw} R_{i})$

\centering  $R_{0} \equiv s$, \quad $R_{n} \equiv t$
\end{definition}

\begin{proposition}\label{proposition3.7}  is transitive, symmetric and reflexive.
\end{proposition}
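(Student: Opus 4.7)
The plan is to read off each of the three properties directly from the defining sequence $R_0,\ldots,R_n$ used in the definition of $=_{rw}$. None of the arguments require understanding the rewrite rules themselves; they are purely facts about the symmetric-transitive-reflexive closure of an arbitrary relation (here $\rhd_{1rw}$).

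For reflexivity, I would take the trivial sequence consisting of a single term $R_0 \equiv s$, i.e., $n = 0$. The condition $(\forall i \leq n-1)(\cdots)$ is vacuous in this case, and $R_0 \equiv s$, $R_n \equiv s$, so $s =_{rw} s$ holds.

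For symmetry, suppose $s =_{rw} t$ witnessed by $R_0,\ldots,R_n$ with $R_0 \equiv s$ and $R_n \equiv t$. I would consider the reversed sequence $R'_i := R_{n-i}$, so that $R'_0 \equiv t$ and $R'_n \equiv s$. The key observation is that the clause $R_i \rhd_{1rw} R_{i+1}$ or $R_{i+1} \rhd_{1rw} R_i$ is itself symmetric in the two arguments, so it is preserved under swapping consecutive entries; hence the reversed sequence also satisfies the defining condition and witnesses $t =_{rw} s$.

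For transitivity, assume $s =_{rw} t$ via $R_0,\ldots,R_n$ and $t =_{rw} u$ via $S_0,\ldots,S_m$, with $R_n \equiv t \equiv S_0$. I would concatenate these into $R_0,\ldots,R_{n-1},R_n,S_1,\ldots,S_m$; since $R_n \equiv S_0$, the step from index $n-1$ to $n$ and from $n$ to $n+1$ are already of the required form, and the resulting sequence of length $n+m$ witnesses $s =_{rw} u$. I do not anticipate a genuine obstacle: the only thing to be careful about is matching indices correctly in the concatenation and noting the symmetry of the one-step clause for the reversal argument.
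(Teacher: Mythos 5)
Your proposal is correct and matches the paper's argument in substance: the paper simply observes that $=_{rw}$ is by definition the reflexive, symmetric and transitive closure of one-step $rw$-contraction, and your trivial-sequence, sequence-reversal, and concatenation constructions are exactly the standard unpacking of that observation. No gap; you have merely made explicit what the paper leaves implicit.
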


\begin{proof}
Comes directly from the fact that $rw$-equality is the transitive, reflexive and symmetric closure of $rw$.
\end{proof}

Before talking about higher $LND_{EQ}-TRS$ systems, we'd like to mention that  $LND_{EQ}-TRS$ is terminating and confluent. The proof of this affirmation can be found in \cite{Anjo1,Ruy2,Ruy3,RuyAnjolinaLivro}.

\subsection{$LND_{EQ}-TRS_{2}$}

Until now, this section has concluded that there exist redundancies which are resolved by a system called $LND_{EQ}-TRS$. This system establishes rules that reduces these redundancies. Moreover, we concluded that these redundancies are just redundant uses of the equality axioms showed in \emph{section 2}. In fact, since these axioms just defines an equality theory for type theory, one can specify and say that these are redundancies of the equality of type theory. As we mentioned, the $LND_{EQ}-TRS$ has a total of 39 rules \cite{Anjo1}. Since the $rw$-equality is based on the rules of  $LND_{EQ}-TRS$,  one can just imagine the high number of redundancies that $rw$-equality could cause. In fact, a thoroughly study of all the redundancies caused by these rules could generate an entire new work. Fortunately, we are only interested in the redundancies caused by the fact that $rw$-equality is transitive, reflexive and symmetric with the addition of only one specific $rw_{2}$-rule. Let's say that we have a system, called $LND_{EQ}-TRS_{2}$, that resolves all the redundancies caused by $rw$-equality (the same way that $LND_{EQ}-TRS$ resolves all the redundancies caused by equality). Since we know that $rw$-equality is transitive, symmetric and reflexive, it should have the same redundancies that the equality had involving only these properties. Since $rw$-equality is just a sequence of $rw$-rules (also similar to equality, since equality is just a computational path, i.e., a sequence of identifiers), then we could put a name on these sequences. For example, if $s$ and $t$ are $rw$-equal because there exists a sequence $\theta: R_{0},....,R_{n}$ that justifies the $rw$-equality, then we can write that $s =_{rw_{\theta}} t$. Thus, we can rewrite, using $rw$-equality, all the rules that originated the rules involving $\tau$, $\sigma$ and $\rho$. For example, we have:

\bigskip

\begin{prooftree}
\hskip - 155pt
\AxiomC{$x =_{rw_{t}} y : A$}
\AxiomC{$y =_{rw_{r}} w : A$}
\BinaryInfC{$x =_{rw_{\tau(t,r)}} w : A$}
\AxiomC{$w =_{rw_{s}} z : A$}
\BinaryInfC{$x =_{rw_{\tau(\tau(t,r),s)}} z : A$}
\end{prooftree}

\begin{prooftree}
\hskip 4cm
\AxiomC{$x =_{rw_{t}} y : A$}
\AxiomC{$y=_{rw_{r}} w : A$}
\AxiomC{$w=_{rw_{s}} z : A$}
\BinaryInfC{$y =_{rw_{\tau(r,s)}} z : A$}
\LeftLabel{$\rhd_{tt_{2}}$}
\BinaryInfC{$x =_{rw_{\tau(t,\tau(r,s))}} z : A$}
\end{prooftree}

\bigskip

Therefore, we obtain the rule $tt_{2}$, that resolves one of the redundancies caused by the transitivity of $rw$-equality (the $2$ in $tt_{2}$ indicates that it is a rule that resolves a redundancy of $rw$-equality). In fact, using the same reasoning, we can obtain, for $rw$-equality, all the redundancies that we have showed in page $5$. In other words, we have $tr_{2}$, $tsr_{2}$, $trr_{2}$, $tlr_{2}$, $sr_{2}$, $ss_{2}$ and $tt_{2}$. Since we have now rules of $LND_{EQ}-TRS_{2}$, we can use all the concepts that we have just defined for $LND_{EQ}-TRS$. The only difference is that instead of having $rw$-rules and $rw$-equality, we have $rw_{2}$-rules and $rw_{2}$-equality.

There is a important rule specific to this system. It stems from the fact that transitivity of reducible paths can be reduced in different ways, but generating the same result. For example, consider the simple case of $\tau(s,t)$ and consider that it is possible to reduce $s$ to $s'$ and $t$ to $t'$. There is two possible $rw$-sequences that reduces this case: The first one is $\theta: \tau(s,t) \rhd_{1rw} \tau(s',t) \rhd_{1rw} \tau(s',t')$ and the second $\theta': \tau(s,t) \rhd_{1rw} \tau(s,t') \rhd_{1rw} \tau(s',t')$. Both $rw$-sequences obtained the same result in similar ways, the only difference being the choices that have been made at each step. Since the variables, when considered individually,  followed the same reductions, these $rw$-sequences should be considered redundant relative to each other and, for that reason, there should be $rw_{2}$-rule that establishes this reduction. This rule is called \emph{choice independence} and is denoted by $cd_{2}$. In fact, independent of the quantity of transitivities and variables, if the sole difference between the $rw$-sequences are the choices that were made in each step, then this rule will establish the $rw_{2}$-equality between the sequences.

\begin{proposition}
$rw_{2}$-equality is transitive, symmetric and reflexive.
\end{proposition}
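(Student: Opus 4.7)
The plan is to mirror exactly the argument used for Proposition~\ref{proposition3.7}, since by construction $rw_{2}$-equality is defined in total analogy with $rw$-equality (as the paper notes, ``we can use all the concepts that we have just defined for $LND_{EQ}-TRS$''). So $s =_{rw_{2}} t$ holds precisely when there is a finite sequence $R_{0},\dots,R_{n}$ with $R_{0} \equiv s$, $R_{n} \equiv t$, and for every $i \leq n-1$ either $R_{i} \rhd_{1rw_{2}} R_{i+1}$ or $R_{i+1} \rhd_{1rw_{2}} R_{i}$. All three properties are then immediate from the fact that $=_{rw_{2}}$ is, by definition, the reflexive-symmetric-transitive closure of $\rhd_{1rw_{2}}$.

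Concretely, I would dispatch the three properties in the natural order. For \emph{reflexivity}, given any path $s$, take the trivial sequence of length $n=0$ with $R_{0} \equiv s$; the quantified condition $(\forall i \leq n-1)(\ldots)$ is vacuously satisfied, so $s =_{rw_{2}} s$. For \emph{symmetry}, given a witnessing sequence $R_{0},\dots,R_{n}$ for $s =_{rw_{2}} t$, define $R'_{i} := R_{n-i}$; then $R'_{0} \equiv t$, $R'_{n} \equiv s$, and the disjunctive clause ``$R_{i} \rhd_{1rw_{2}} R_{i+1}$ or $R_{i+1} \rhd_{1rw_{2}} R_{i}$'' is preserved under index reversal, so $t =_{rw_{2}} s$. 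For \emph{transitivity}, given witnessing sequences $R_{0},\dots,R_{n}$ for $s =_{rw_{2}} t$ and $R'_{0},\dots,R'_{m}$ for $t =_{rw_{2}} u$, concatenate them (identifying $R_{n} \equiv t \equiv R'_{0}$) to produce a single sequence of length $n+m$ from $s$ to $u$ whose consecutive pairs each satisfy the disjunctive clause, yielding $s =_{rw_{2}} u$.

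There is essentially no obstacle here, since the argument is purely closure-theoretic and does not depend on any particular $rw_{2}$-rule (not even on the new \emph{choice independence} rule $cd_{2}$). The only minor point to be careful about is to make sure the definition of $=_{rw_{2}}$ is taken to be genuinely the one indicated by the paragraph preceding the proposition, namely the closure of $\rhd_{1rw_{2}}$ under the three clauses; if one wanted to be fully pedantic, one could first write out that definition explicitly (parallel to Definition of $=_{rw}$) and then invoke the one-line observation that any such closure of a binary relation is automatically an equivalence. I expect the whole proof to occupy no more space than that of Proposition~\ref{proposition3.7}.
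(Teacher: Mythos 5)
Your proposal is correct and takes essentially the same route as the paper: the paper's proof is simply ``Analogous to Proposition~3.7,'' which in turn appeals to $rw$-equality being the reflexive-symmetric-transitive closure of one-step contraction. You merely spell out the closure argument (empty sequence, sequence reversal, concatenation) that the paper leaves implicit.
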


\begin{proof}
Analogous to Proposition \ref{proposition3.7}.
\end{proof}

\section{The Groupoid Induced by Computational Paths}

The objective of this section is to show that computational paths, together with the reduction rules discussed in the last section, are capable of inducing structures known as groupoids. To do that, we will use a categorical interpretation.

Before we conclude that computational paths induces categories with groupoid properties, we need to make clear the difference between a strict and a weak category. As one will see, the word weak will appear many times. This will be the case because some of the categorical equalities will not hold ``on the nose", so to say. They will hold up to $rw$-equality or up to higher levels of $rw$-equalities. This is similar to the groupoid model of the identity type proposed by \cite{hofmann1}. In \cite{hofmann1}, the equalities do not hold ``on the nose", they hold up to propositional equality (or up to homotopy if one uses the homotopic interpretation). To indicate that these equalities hold only up to some property, we say that the induced structure is a weak categorical structure.

For each type $A$, computational paths induces a weak categorical structure called $A_{rw}$. In $A_{rw}$, the objects are terms $a$ of the type $A$ and morphisms between terms $a : A$ and $b : A$ are arrows $s: a \rightarrow b$ such that $s$ is a computational path between the terms, i.e., $a =_{s} b : A$. One can easily check that $A_{rw}$ is a category. To do that, just define the composition of morphisms as the transitivity of two paths and the reflexive path as the identity arrow. The $rw$-equality of the associativity comes directly from the $tt$ rule and the $rw$-equality of the identity laws comes from the $trr$ and $tlr$ rules. Of course, as discussed before, this only forms a weak structure, since the equalities only hold up to $rw$-equality.  The most interesting fact about $A_{rw}$ is the following proposition:

\begin{proposition}
The induced structure $A_{rw}$  has a weak groupoidal structure.
\end{proposition}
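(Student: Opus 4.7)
The plan is to build on the preceding observation that $A_{rw}$ is a weak category, whose objects are terms of $A$, whose morphisms are computational paths, whose composition of $s : a \to b$ with $t : b \to c$ is the transitive path $\tau(s,t)$, and whose identity on $a$ is the reflexive path $\rho_a$. To upgrade this to a weak groupoid, I must exhibit, for every morphism $s : a \to b$, an inverse $s^{-1} : b \to a$ satisfying $\tau(s, s^{-1}) =_{rw} \rho_a$ and $\tau(s^{-1}, s) =_{rw} \rho_b$.

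The natural candidate is the symmetric path: set $s^{-1} := \sigma(s)$. With this choice, the two required equations reduce to immediate applications of the reduction rules already listed for $LND_{EQ}-TRS$. Specifically, $\rhd_{tr}$ asserts that $\tau(s, \sigma(s)) \rhd_{1rw} \rho$, whence $\tau(s, \sigma(s)) =_{rw} \rho_a$; dually, $\rhd_{tsr}$ gives $\tau(\sigma(s), s) \rhd_{1rw} \rho$, whence $\tau(\sigma(s), s) =_{rw} \rho_b$. Combined with the associativity and identity laws already used to verify that $A_{rw}$ is a weak category (via $\rhd_{tt}$, $\rhd_{trr}$ and $\rhd_{tlr}$), these are precisely the groupoid axioms, holding here only up to $rw$-equality rather than strictly --- which is the sense in which the induced structure is a \emph{weak} groupoid. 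As a sanity check one may further record $\sigma(\sigma(s)) =_{rw} s$ via $\rhd_{ss}$, confirming that the inverse of the inverse recovers the original path up to $rw$-equality.

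The main obstacle is not really mathematical but organisational: matching each groupoid axiom to the rule in $LND_{EQ}-TRS$ that witnesses it, and being explicit that every equation in the groupoid structure holds only up to $=_{rw}$, not ``on the nose''. One must also observe that the assignment $s \mapsto \sigma(s)$ is well-behaved with respect to $rw$-equivalence, since $\sigma$ is a constructor of the rewriting system and $=_{rw}$ is closed under its application; this ensures the inverse operation descends to $rw$-equivalence classes of paths. Once this bookkeeping is in place, the proposition follows at once from the rules catalogued in the preceding section.
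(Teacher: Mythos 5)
Your proposal is correct and follows essentially the same route as the paper: take $\sigma(s)$ as the inverse of $s$ and verify the two inverse laws up to $rw$-equality via the rules $tr$ and $tsr$, with associativity and identities already handled by $tt$, $trr$ and $tlr$. The additional remarks on $\sigma(\sigma(s)) =_{rw} s$ and on compatibility with $rw$-equivalence are harmless extras not needed for the argument as the paper states it.
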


\begin{proof}
A groupoid is just a category in which every arrow is an isomorphism. Since we are working in a weak sense, the isomorphism equalities need only to hold up to $rw$-equality. To show that, for every arrow $s: a \rightarrow b$, we need to show a $t: b \rightarrow a$ such that $t \circ s =_{rw} 1_{a}$ and $s \circ t =_{rw} 1_{b}$. To do that, recall that every computational path has an inverse path $\sigma(s)$. Put $t = \sigma(s)$. Thus, we have that $s \circ t = s \circ \sigma(s) = \tau(\sigma(s),s) \rhd_{tsr} \rho_{b}$. Since $\rho_{b} = 1_{b}$, we conclude that  $s \circ t =_{rw} 1_{b}$. We also have that $t \circ s = \sigma(s) \circ s = \tau(s,\sigma(s)) \rhd_{tr} \rho_{a}$. Therefore,  $t \circ s = 1_{a}$. We conclude that every arrow is a weak isomorphism and thus, $A_{rw}$ is a weak groupoidal structure.
\end{proof}

With that, we conclude that computational paths have a groupoid model.

\subsection{Higher structures}

We have just showed that computational paths induce a weak groupoidal structure known as $A_{rw}$. We also know that the arrows (or morphisms) of $A_{rw}$ are computational paths between two terms of a type $A$. As we saw in the previous section, sometimes a computational path can be reduced to another by rules that we called $rw$-rules. That way, if we have terms $a, b : A$ and paths between these terms, we can define a new structure. This new structure, called $A_{2rw}(a,b)$, has, as objects, computational paths between $a$ and $b$ and the set of morphisms $Hom_{A_{2rw}(a,b)}(s,t)$ between paths $a =_{s} b$ and $a =_{t} b$ corresponds to the set of sequences that prove $s =_{rw} t$. Since $rw$-equality is transitive, reflexive and symmetric, $A_{2rw}$ is a weak categorical structure which the equalities hold up to $rw_{2}$-equality. The proof of this fact is analogous to the \emph{proposition 4.1}. The sole difference is the fact that since the morphisms are $rw$-equalities, instead of computational paths, all the equalities will hold up to $rw_{2}$-equality. To see this, take the example of the associativity. Looking at the $LND_{EQ}-TRS_{2}$ system, we have that $\tau(\tau(\theta,\sigma),\phi) \rhd_{tt_{2}} \tau(\theta,\tau(\sigma,\phi))$ ($\theta$, $\sigma$ and $\phi$ represent $rw$-equalities between paths from $a$ to $b$). Therefore,   $\tau(\tau(\theta,\sigma),\phi) =_{rw_{2}} \tau(\theta,\tau(\sigma,\phi))$. The associative law holds up to $rw_{2}$-equality. As one can easily check, the identity law will also hold up to $rw_{2}$-equality. Therefore, $A_{2rw}(a,b)$ has a weak categorical structure. Analogous to \emph{proposition 4.4}, the groupoid law will also hold up to $rw_{2}$-equality.

Instead of considering $A_{rw}$ and $A_{2rw}(a,b)$ as separated structures, we can think of a unique structure with 2-levels. The first level is $A_{rw}$ and the second one is the structure $A_{2rw}(a,b)$ between each pair of objects $a, b \in A_{rw}$. We call this structure $2-A_{rw}$. The morphisms of the first level are called $1$-morphisms and the ones of the second level are called $2$-morphisms (also known as $2$-arrows or $2$-cells). Since it has multiple levels, it is considered a higher structure. We want to prove that this structure is a categorical structure known as weak $2$-category. The main problem is the fact that in a weak $2$-category, the last level (i.e., the second level) needs to hold up in a strict sense. This is not the case for $2-A_{2rw}$, since each $A_{2rw}(a,b)$ only holds up to $rw_{2}$-equality. Nevertheless, there still a way to induce this weak 2-category. Since $rw$-equality is an equivalence relation (because it is transitive, symmetric and reflexive), we can consider a special $A_{2rw}(a,b)$, where the arrows are the arrows of $A_{2rw}(a,b)$ modulo $rw_{2}$-equality. That way, since the equalities hold up to $rw_{2}$-equality in $A_{2rw}(a,b)$, they will hold in a strict sense when we consider the equivalence classes of $rw_{2}$-equality. We call this structure $[A_{2rw}(a,b)]$. In this structure, consider the composition of arrows defined as: $[\theta]_{rw_{2}} \circ[\phi]_{rw_{2}} = [\theta \circ \phi]_{rw_{2}}$. Now,  we can think of the structure $[2-A_{rw}]$. This structure is similar to $2-A_{rw}$. The difference is that the categories of the second level are $[A_{2rw}(a,b)]$ instead of $A_{2rw}(a,b)$. We can now prove that $[2-A_{rw}]$ is a weak $2$-category:

\begin{proposition}

Computational paths induce a weak $2$-category called $[2-A_{rw}]$.

\end{proposition}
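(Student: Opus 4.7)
The plan is to exhibit $[2\text{-}A_{rw}]$ as a bicategory by describing the cells, the composition operations, and the coherence 2-cells, and then verifying the bicategorical axioms. I take the 0-cells to be terms $a : A$, the 1-cells to be computational paths $s : a \to b$, and the 2-cells $[\theta] : s \Rightarrow t$ to be the $rw_2$-equivalence classes of $rw$-equality sequences $\theta$ witnessing $s =_{rw} t$. The hom-categories $[A_{2rw}(a,b)]$ are strict by the quotient construction: vertical composition is transitivity of $rw$-equalities, identities are reflexive $rw$-equalities, and the strictness of associativity and unit laws follows because $tt_2$, $trr_2$ and $tlr_2$ collapse after passing to $rw_2$-classes.

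Next I would define horizontal composition. On 1-cells it is $\tau(s,t)$ with unit $\rho_a$, inheriting the groupoid structure of $A_{rw}$. On 2-cells, given $[\theta] : s \Rightarrow s'$ in $[A_{2rw}(a,b)]$ and $[\phi] : t \Rightarrow t'$ in $[A_{2rw}(b,c)]$, I would set $[\phi] \ast [\theta]$ to be the class of any $rw$-sequence reducing $\tau(s,t)$ to $\tau(s',t')$ by applying $\theta$ and $\phi$ componentwise; the choice-independence rule $cd_2$ guarantees that different orderings of the componentwise reductions give $rw_2$-equal results, so the operation descends to $rw_2$-classes. The associator $\alpha_{s,t,u} : \tau(\tau(s,t),u) \Rightarrow \tau(s,\tau(t,u))$ is represented by a single application of $tt$, and the left and right unitors $\lambda_s : \tau(\rho_a,s) \Rightarrow s$ and $\varrho_s : \tau(s,\rho_b) \Rightarrow s$ by $tlr$ and $trr$; each is invertible because $rw$-equality is symmetric, so these are 2-isomorphisms. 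Naturality of $\alpha$, $\lambda$, $\varrho$ reduces to routine manipulations of $rw$-sequences that become strict equalities after quotienting by $rw_2$.

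The main obstacle will be the coherence equations, namely the pentagon and triangle axioms together with the interchange law
\[
([\phi'] \circ [\phi]) \ast ([\theta'] \circ [\theta]) \;=\; ([\phi'] \ast [\theta']) \circ ([\phi] \ast [\theta]).
\]
Each of these reduces to the claim that two formally distinct $rw$-sequences between the same iterated transitivity of paths are $rw_2$-equal, which I expect to follow from iterated uses of $cd_2$ and $tt_2$: the pentagon is the assertion that the two ways of reassociating a fourfold $\tau$ via $tt$ agree up to $rw_2$, and the triangle is the compatibility of $tt$ with $trr$/$tlr$ on a $\rho$-inserted transitivity. The delicate point is to check that the rules of $LND_{EQ}\text{-}TRS_2$ genuinely suffice for every such diagram; should any coherence instance escape them, the clean fix is to add the corresponding $rw_2$-rule, but the verification that the existing system already covers all cases is the combinatorial heart of the proof.
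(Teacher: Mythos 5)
Your proposal follows essentially the same route as the paper's proof: the same choice of $0$-, $1$- and $2$-cells, horizontal composition of $2$-cells defined componentwise on $rw$-sequences with $cd_{2}$ handling the ordering ambiguity, the associator given by (the inverse of) $tt$, the unitors by $trr$ and $tlr$, the interchange law discharged by $cd_{2}$, and the pentagon and triangle verified by direct computation with these isomorphisms, exactly as in the paper's appendix. Your closing caution about whether the $LND_{EQ}$-$TRS_{2}$ rules cover every coherence instance is a reasonable flag, but the paper resolves it the same way you anticipate, by explicit verification of the two Mac Lane diagrams.
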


\begin{proof} First of all, let's draw a diagram that represents $[2-A_{rw}]$:

\bigskip

\begin{center}

\begin{tikzpicture}[>=latex,
    every node/.style={auto},
    arrowstyle/.style={double,->,shorten <=3pt,shorten >=3pt},
    mydot/.style={circle,fill}]

    \coordinate[mydot,label=above:$a$](a)  at (0,0);
    \coordinate[mydot,label=above:$b$](b) at (3,0);
    \coordinate[mydot,label=above:$c$](c) at (6,0);
    \coordinate[mydot,label=above:$d$](d) at (9,0);
    \draw[->]   (a) to[bend left=50]  coordinate (s) node[]{$s$}          (b);
    \draw[->]   (a) to coordinate (t) node[label={[label distance=-1cm]43:t}] {}    (b);
    \draw[->]   (a) to[bend right=70]  coordinate (x) node[,swap]{$x$}          (b);
    \draw[->]   (b) to[bend left=50]  coordinate (r) node[]{$r$}          (c);
    \draw[->]   (b) to coordinate (w) node[label={[label distance=-1cm]40:w}] {}    (c);
    \draw[->]   (b) to[bend right=70]  coordinate (y) node[,swap]{$y$}          (c);
    \draw[->]   (c) to[bend left=50]  coordinate (p) node[]{$p$}          (d);
    \draw[->]   (c) to coordinate (q) node[label={[label distance=-1cm]50:q}] {}    (d);
       \draw[->]   (c) to[bend right=70]  coordinate (z) node[,swap]{$z$}          (d);

    \draw[arrowstyle] (s) to node[right]{[$\alpha]_{rw_{2}}$} (t);
     \draw[arrowstyle] (t) to node[right]{[$\chi]_{rw_{2}}$} (x);
    \draw[arrowstyle] (r) to node[right]{[$\theta]_{rw_{2}}$} (w);
 \draw[arrowstyle] (w) to node[right]{[$\varphi]_{rw_{2}}$} (y);
\draw[arrowstyle] (r) to node[right]{[$\theta]_{rw_{2}}$} (w);
   \draw[arrowstyle] (p) to node[right]{[$\psi]_{rw_{2}}$} (q);
\draw[arrowstyle] (q) to node[right]{[$\phi]_{rw_{2}}$} (z);
\end{tikzpicture}	

\end{center}

In this diagram we represent $1$-arrows and $2$-arrows between these $1$-arrows. The fact that $2$-arrows are equivalence classes is represented by the brackets.

Given $[\alpha]_{rw_{2}}: [s = \alpha_{1}, ..., \alpha_{n} = t]$ and $[\theta]_{rw_{2}}: [r = \theta_{1}, ...,\theta_{m} = w]$, then we define the horizontal composition $([\theta]_{rw_{2}} \circ_{h} [\alpha]_{rw_{2}})$ as the sequence $[\tau(s = \alpha_{1}, r = \theta_{1}), ..., \tau(\alpha_{n},\theta_{1}) ..., \tau(\alpha_{n} = t,\theta_{m} = w)]_{rw_{2}}$.

We also need to verify the associative and identity law for $\circ_{h}$. Since we are working with a weak $2$-category, these laws should hold up to natural isomorphism \cite{leinster1}. To verify these laws, the idea is that every $2$-morphism of $[A_{rw_{2}}]$ is an isomorphism. The proof of this fact is analogous to the one of \emph{Proposition 5.1}, but using $rw_{2}$-rules instead of $rw$-rules. Since a natural transformation is a natural isomorphism iff every component is an isomorphism (as one can check in \cite{Steve2}), we conclude that finding isomorphisms for the associative and identity laws is just a matter of finding the correct morphisms.

For the associative law, we need to check that there is a natural isomorphism $assoc$ between $(([\psi]_{rw_{2}} \circ_{h} [\theta]_{rw_{2}}) \circ_{h} [\alpha]_{rw_{2}})$ and  $([\psi]_{rw_{2}} \circ_{h} ([\theta]_{rw_{2}} \circ_{h} [\alpha]_{rw_{2}}))$. To do this, by the definition of horizontal composition, a component of $(([\psi]_{rw_{2}} \circ_{h} [\theta]_{rw_{2}})$ is a term of the form $\tau(\alpha_{x}, \tau(\theta_{y}, \psi_{z}))$, with $x, y$, and $z$ being suitable natural numbers that respect the order of horizontal composition. Analogously, the same component of  $([\psi]_{rw_{2}} \circ_{h} ([\theta]_{rw_{2}} \circ_{h} [\alpha]_{rw_{2}}))$ is just a suitable term  $\tau(\tau(\alpha_{x},\theta_{y}), \psi_{z})$. The isomorphism between these component is clearly established by the inverse $tt$ rule, i.e.,  $\tau(\alpha_{x}, \tau(\theta_{y}, \psi_{z})) =_{rw_{\sigma(tt)}} \tau(\tau(\alpha_{x},\theta_{y}), \psi_{z})$

The identity laws use the same idea. We need to check that $([\alpha]_{rw_{2}} \circ_{h} [\rho_{\rho_{a}}]_{rw_{2}}) =  [\alpha]_{rw_{2}}$. To do that, we need to take components $(\rho_{\rho_{a}}, \alpha_{y})$ and $\alpha_{y}$ and establish their isomorphism  $r^{*}_{s}$: $(\rho_{\rho_{a}}, \alpha_{y}) =_{rw_{tlr}} \alpha_{y}$.

The other natural isomorphism  $l^{*}_{s}$, i.e., the isomorphism between  $( [\rho_{\rho_{b}}]_{rw_{2}} \circ_{h} [\alpha]_{rw_{2}})$ and $[\alpha]_{rw_{2}}$ can be established in an analogous way, just using the rule $trr$ instead of $tlr$. Just for purpose of clarification, $\rho_{\rho_{a}}$ comes from the reflexive property of $rw$-equality. Since $\rho_{a}$ is the identity path, using the reflexivity we establish that $\rho_{a} =_{rw} \rho_{a}$, generating $\rho_{\rho_{a}}$.

 With the associative and identity isomorphisms established, we now need to check the \emph{interchange law} \cite{leinster1}. We need to check that:
\begin{center}
$([\varphi]_{rw_{2}} \circ [\theta]_{rw_{2}}) \circ_{h} ([\chi]_{rw_{2}} \circ [\alpha]_{rw_{2}}) = ([\varphi]_{rw_{2}} \circ_{h} [\chi]_{rw_{2}}) \circ ([\theta]_{rw_{2}} \circ_{h} [\alpha]_{rw_{2}})$
\end{center}

From  $(([\varphi]_{rw_{2}} \circ [\theta]_{rw_{2}}) \circ_{h} ([\chi]_{rw_{2}} \circ [\alpha]_{rw_{2}}))$, we have:

\begin{center}
 $(([\varphi]_{rw_{2}} \circ [\theta]_{rw_{2}}) \circ_{h} ([\chi]_{rw_{2}} \circ [\alpha]_{rw_{2}})) =$

 $[\tau(\theta, \varphi)]_{rw_{2}} \circ_{h} [\tau(\alpha, \chi)]_{rw_{2}}=$

$[\theta_{1}, ..., \theta_{n} = \varphi_{1}, ..., \varphi_{n'}]_{rw_{2}} \circ_{h} [(\alpha_{1}, ..., \alpha_{m} = \chi_{1}, ... \chi_{m'}]_{rw_{2}} = $

$[\tau(\alpha_{1}, \theta_{1}),..., \tau(\alpha_{m} = \chi_{1}, \theta_{1}), ..., \tau(\chi_{n},\theta_{1}), ..., \tau(\chi_{n},\theta_{m'} = \varphi_{1}),..., \tau(\chi_{n},\varphi_{n'})]_{rw_{2}}$
\end{center}

From   $(([\varphi]_{rw_{2}} \circ_{h} [\chi]_{rw_{2}}) \circ ([\theta]_{rw_{2}} \circ_{h} [\alpha]_{rw_{2}})))$:

\begin{center}
 $(([\varphi]_{rw_{2}} \circ_{h} [\chi]_{rw_{2}}) \circ ([\theta]_{rw_{2}} \circ_{h} [\alpha]_{rw_{2}}))(r \circ s) =$

 $([\tau(\chi_{1}, \varphi_{1}),...,\tau(\chi_{n},\varphi_{1}),...,\tau(\chi_{n},\varphi_{n'})]_{rw_{2}} \circ [\tau(\alpha_{1},\theta_{1}),...,\tau(\alpha_{m},\theta_{1}),...,\tau(\alpha_{m},\theta_{m'})]_{rw_{2}} = $

$[\tau(\alpha_{1},\theta_{1}),...,\tau(\alpha_{m},\theta_{1}),...,\tau(\alpha_{m},\theta_{m'}),...,\tau(\chi_{1}, \varphi_{1}),...,\tau(\chi_{n},\varphi_{1}),...,\tau(\chi_{n},\varphi_{n'})]_{rw_{2}}$

\end{center}

If one looks closely, one can notice that this is a suitable to apply $cd_{2}$. Individually, every variable that appears in the sequence of transitivities follows the same expansion in both cases. The only difference is how the choices have been made. Therefore, the $rw_{2}$-equality is established by $cd_{2}$. Since we are working with equivalence classes, this equality holds strictly.

About the coherence laws, one can easily check that Mac Lane's pentagon and triangle diagrams are commutative by simple and straightforward applications of natural isomorphisms $assoc$, $r_{s}^{*}$ and $l_{s}^{*}$. One can check appendix A for a detailed proof of this part.
\end{proof}

We can also conclude that $[2-A_{rw}]$ has a weak $2$-groupoid structure. That is the case because we already know (from \emph{proposition 4.4}) that the groupoid laws are satisfied by $1$-morphisms up to the isomorphism of the next level, i.e., up to $rw$-equality and the $2$-morphisms, as we have just seen, are isomorphisms (that hold in a strict way, since the second level is using classes of equivalence).

Our objective in a future work is to define higher levels of $rw$-equalities and, from that, we will try to obtain even higher groupoid structures. Eventually, our goal will be the construction of a weak $\omega$-groupoid. We believe that it is possible to achieve these results, since it was proved by \cite{lumsdaine1,Benno} that the identity type induces such structure. Given the connection between computational paths and terms of identity type, we should be able to prove that computational paths also induces a weak $\omega$-groupoid.

\section{Conclusion}

The main purpose of this work was the study of mathematical properties of an entity known as computational paths. Inspired by our previous work, which has established the connection between computational paths and the identity type \cite{Art1}, our main motivation was the belief that some of the properties of the identity type of Martin L\"of's Intensional Type Theory should be present on computational paths. The property that we have investigated is the fact that the identity type induces a groupoid structure. Our idea has been the fact that computational paths should also be capable of naturally inducing such structure.

To achieve our results, we have defined a computational path as a sequence of equality identifiers between two terms of a type, where each identifier is either an axiom of the equality theory of Type Theory or a change of bound variables. Then, we have showed, using examples, that the usage of the equality axioms can cause redundancies within a computational path. We mentioned that there already exists a system called $LND_{EQ}-TRS$ that maps and resolves these redundancies. We have gone further, proposing the existence of higher $LND_{EQ}-TRS_{n}$ systems, using the idea that a $LND_{EQ}-TRS_{n}$ resolves the redundancies of a $LND_{EQ}-TRS_{n-1}$. Using  $LND_{EQ}-TRS$, we have proved that a computational path is capable of inducing a weak groupoid structure. Using the higher rewriting systems, we have showed that it is possible to induce a structure known as weak $2$-groupoid. Based on the fact that there exist infinite $LND_{EQ}-TRS_{n}$ systems, we opened the way, in a future work, for a possible proof establishing that computational paths induces a weak $w$-groupoid.

%----------------------------------------------------------------------------------------
%	BIBLIOGRAPHY
%----------------------------------------------------------------------------------------
\bibliographystyle{plain}
\bibliography{ref}

\newpage
\begin{appendices}

\section{Coherence Laws}

In this section, we prove that Mac Lane's pentagon holds for $[2 - A_{rw}]$. In other words, we should check the following fact:

 \cite{leinster1}:
 
 Given the following diagram of $1-morphisms$:
 \begin{center}
 	\begin{tikzpicture}[>=latex,
 	every node/.style={auto},
 	arrowstyle/.style={double,->,shorten <=3pt,shorten >=3pt},
 	mydot/.style={circle,fill}]
 	
 	\coordinate[mydot,label=above:$a$](a)  at (0,0);
 	\coordinate[mydot,label=above:$b$](b) at (2,0);
 	\coordinate[mydot,label=above:$c$](c) at (4,0);
 	\coordinate[mydot,label=above:$d$](d) at (6,0);
 	\coordinate[mydot,label=above:$e$](e) at (8,0);
 	
 	\draw[->]   (a) to coordinate (s) node[] {$s$}    (b);
 	\draw[->]   (b) to coordinate (r) node[] {$r$}    (c);
 	\draw[->]   (c) to coordinate (p) node[] {$p$}    (d);
 	\draw[->]   (d) to coordinate (u) node[] {$u$}    (e);

 	\end{tikzpicture}
 \end{center}
 
 The following diagrams should commute:
 
 \begin{center}
 	\begin{tikzpicture}[>=latex,
 	every node/.style={auto},
 	arrowstyle/.style={double,->,shorten <=3pt,shorten >=3pt},
 	mydot/.style={circle,fill}]
 	
 	\coordinate[mydot,label=left:$((u\circ p) \circ r) \circ s$](a)  at (2,0);
 	\coordinate[mydot,label=left:$(u \circ p) \circ (r \circ s)$](b) at (0,-3);
 	\coordinate[mydot,label=right:$(u \circ(p \circ r)) \circ s$](c) at (5,0);
 	\coordinate[mydot,label=right:$u \circ((p \circ r) \circ s) $](d) at (7,-3);
 	\coordinate[mydot,label=below:$u \circ (p \circ (r \circ s))$](e) at (3.5,-6);
 	
 	\draw[->]   (a) to coordinate (s) node[,swap] {$assoc$}    (b);
 	\draw[->]   (a) to coordinate (t) node[,swap] {$assoc \circ_{h} [\rho_{s}]_{rw_{2}}$}    (c);
 	\draw[->]   (c) to coordinate (x) node[] {$assoc$}    (d);
 	\draw[->]   (b) to coordinate (y) node[,swap] {$assoc$}    (e);
 	\draw[->]   (d) to coordinate (z) node[] {$[\rho_{u}]_{rw_{2}} \circ_{h} assoc$}    (e);
 	\end{tikzpicture}
 	
 	\bigskip
 	
 	\begin{tikzpicture}[>=latex,
 	every node/.style={auto},
 	arrowstyle/.style={double,->,shorten <=3pt,shorten >=3pt},
 	mydot/.style={circle,fill}]
 	
 	\coordinate[mydot,label=left:$(r \circ \rho_{b}) \circ s$](a)  at (0,0);
 	\coordinate[mydot,label=right:$r \circ (\rho_{b} \circ s)$](b) at (2,0);
 	\coordinate[mydot,label=below:$r \circ s$](c) at (1,-2);
 	
 	\draw[->]   (a) to coordinate (s) node[,swap] {$assoc$}    (b);
 	\draw[->]   (a) to coordinate (t) node[,swap] {$r^{*}_{r} \circ_{h} [\rho_{s}]_{rw_{2}}$}    (c);
 	\draw[->]   (b) to coordinate (x) node[] {$[\rho_{r}]_{rw_{2}} \circ_{h} l^{*}_{s}$}    (c);
 	\end{tikzpicture}
 \end{center}
 
 The proofs are straightforward. Remember that $assoc$ is just an application of $\sigma(tt)$, $r^{*}_{r}$ is an application of $trr$ and $l^{*}_{s}$ an application of $tlr$. First, let's start with $((u\circ p) \circ r) \circ s = \tau(s,\tau(r,\tau(p,u)))$ going to the right of the diagram:
 \begin{center}
 	$(assoc \circ_{h} [\rho_{s}]_{rw_{2}})(\tau(s,\tau(r,\tau(p,u)))) = \tau(s,assoc(\tau(r,\tau(p,u)))) =$
 	
 	$\tau(s,\tau(\tau(r,p),u))$
 	
 	$assoc(\tau(s,\tau(\tau(r,p),u))) = \tau(\tau(s,\tau(r,p)),u)$
 	
 	$([\rho_{u}]_{rw_{2}} \circ_{h} assoc)(\tau(\tau(s,\tau(r,p)),u)) = \tau(assoc(\tau(s,\tau(r,p))), u) = $
 	
 	$\tau(\tau(\tau(s,r),p)), u) = u \circ (p \circ (r \circ s))$
 \end{center}
 
 Now, starting from the same $ \tau(s,\tau(r,\tau(p,u)))$ and going bottom left:
 
 \begin{center}
 	$assoc(\tau(s,\tau(r,\tau(p,u)))) = \tau(\tau(s,r),\tau(p,u))$
 	
 	$assoc(\tau(\tau(s,r),\tau(p,u))) = \tau(\tau(\tau(s,r),p),u) = u \circ (p \circ (r \circ s))$
 \end{center}
 
 Therefore, the first diagram commutes. We now need to check the second one. Let's start from $((r \circ \rho_{b}) \circ s) = \tau(s,\tau(\rho_{b},r))$ and going to the right of the diagram:
 
 \begin{center}
 	$assoc(\tau(s,\tau(\rho_{b},r))) = \tau(\tau(s,\rho_{b}),r)$
 	
 	$([\rho_{r}]_{rw_{2}} \circ_{h} l^{*}_{s})\tau(\tau(s,\rho_{b}),r) = \tau(l^{*}_{s}(\tau(s,\rho_{b})),r) = $
 	
 	$\tau(s,r) = r \circ s$
 \end{center}
 
 Now, starting from the same $\tau(s,\tau(\rho_{b},r))$ and going right bottom:
 
 \begin{center}
 	$(r^{*}_{r} \circ_{h} [\rho_{s}]_{rw_{2}})\tau(s,\tau(\rho_{b},r)) = \tau(s,r^{*}_{r}(\tau(\rho_{b},r))) =$
 	
 	$\tau(s,r) = r \circ s$
 \end{center}
 
 Thus, the second diagram commutes. The coherence laws hold.

\end{appendices}

\end{document}